\theoremstyle{plain}
\theoremstyle{definition}
\newtheorem{proposition}{Proposition}%[section] % reset lemma numbering for each chapter
\newtheorem{defn}{Definition} %
\title{{Industrial symbiosis: How to apply successfully\footnote{We wish to thank Adi Ayal and Arye Hillman for comments and discussion and to Miriam Berliner for research assistance. This research was supported by the Heth Foundation for Regulation Research. The authors report there are no competing interests to declare.}}}
\author{{Limor Hatsor} \thanks{ Jerusalem College of Technology, Israel, limor.hatsor@gmail.com} \and {Artyom Jelnov} \thanks{Ariel University, Israel, artyomjel@gmail.com}}
\begin{document}
\maketitle
\begin{abstract}
 The premise of industrial symbiosis (IS) is that reusing byproducts as inputs in production is valuable for the environment. In a simple model, considering the technology parameters, we challenge this premise. Ceteris paribus, IS is environmental-friendly. However, in the equilibrium IS may be harmful for the environment. The reason is that the interest of producers to make profits by boosting production (and hence pollution), and then selling their byproducts in the market may not coincide with environmental endeavors. Using comparative static, we pinpoint a key technology parameter - the share of reused byproducts - that may hold benefits for firms, consumers, and the environment. 

\end{abstract}
\textbf{Keywords: }circular economics; industrial symbiosis; economic theory; pollution; policy
\newpage
\section{Introduction}
The global population has increased significantly in recent decades along with the standard of living, resulting in growing production that involves economic costs and negative externalities for society and the environment, such as waste of natural resources, pollution, the destruction of animals' habitat, disease, and the climate change.  
In the presence of environmental concerns, circular economy is gaining  worldwide support as a promising way to reduce environmental damage, to deal with dwindling global resources and to stabilize the climate.\footnote{Many countries are adopting circular economy policies and  regulation. For example, the European Commission Action Plan allocated a budget of more than 10 billion euros for the implementation of circular economy in 2016-2020, including 1.4 billion euros for expenditures on sustainable circular industrial processes and management of resources and waste, 1.8 billion euros for the encouragement of innovative technologies and 5.3 billion euros for the passage of EU waste legislation (see EC, 2015 and  EC, 2020). France adopted an ambitious road-map for transition to a circular economy that includes a reduction of 30\% in the consumption of natural resources as a percentage of GDP per capita between 2010-2030, a reduction of 50\% in the amount of non-hazardous waste, and full recycling of plastic by 2025.}  
Circular economy calls for reducing the need to produce new raw materials by extending the life cycle of resources (byproducts, waste and energy) and products through sustainable consumption, recycling or reuse (\citealp{murray2017circular}, \citealp{wysokinska2016new}). \citet{stahel2016circular} estimates that circular economy could reduce a country’s greenhouse emissions by up to 70 percent of current levels.  

Industrial symbiosis (IS) is one of the main tools to achieve a circular economy based on synergy between production lines. Instead of a linear production model, with one-time use of raw materials, we may potentially reuse byproducts like plastic, minerals, tires, polyethylene packaging, food and hazardous materials, as inputs in production,  as an alternative to the accumulation of waste and pollution and the  environmental burden they generate.\footnote{For example, the waste of the food industry can be used to produce bio-diesel or compost; agricultural waste can be used as a fertilizer to grow other crops or as an additive to animal feed, to produce bio-gas and even to treat polluted land; the waste from producing animal feed and industrial effluent can both be used to grow algae; copper waste can be used as raw material in metal factories; plastic packaging can serve as raw material in the production of plastic goods; sawdust is used as a substrate for raising animals; correct management of construction sites can reduce waste and increase reuse.} 

Researchers and decision makers describe IS as a win-win strategy that produces environmental value for the public as well as profits for the industry from the commerce in byproducts, and from saving logistic and delivery costs of handling waste (landfill, removal of hazardous materials, incineration or export), saving raw material and avoidance of fines and lawsuits. We argue that these claims cannot be taken at face value, because the implementation of IS may affect the decision of producers how much to produce (and pollute) in the equilibrium. 

Using a simple game-theoretic model, we show that the interest of producers to maximize profits may not coincide with environmental endeavors. This may cause two potential inefficiencies in IS implementation. First, even though at the individual level  there is an economic and environmental justification for adopting IS, high adoption cost and returns to scale may prevent its implementation (\citealp{gerarden2017assessing}). The literature is mostly focused on this inefficiency, which is known as the “gap in energy efficiency”. In line with this literature, our results suggest that the gap in energy inefficiency is more likely to occur when there is a large number of small firms in the market. In such cases, policy may be used to reduce adoption costs, including the costs of acquiring knowledge, transaction, coordination, searching for potential buyers of byproducts, transportation, logistics, regulatory or bureaucratic burden, or modification of the production infrastructure (see \citealp{ashraf2016infrastructure}).

We highlight a second type of inefficiency. When IS is profitable for the producers, the extra profit from trade in byproducts and from savings in environmental taxes induces firms to implement IS and boost their production (and pollution). The rise in production benefits consumers due to price reduction, but at the same time may be environmentally-harmful. Specifically, if the IS technology is highly polluting or highly ineffective in the reuse of byproducts, then it cannot fully offset the increase in pollution generated by the augmented production.\footnote{For example, the recycling of electronics is supposed to protect the environment from the massive amounts of polluting waste created globally each year. Huge quantities of computers, keyboards, cables, and screens are sent to developing countries, where they are recycled. However, during the process of dismantling, the electronic goods are burned off in order to retrieve the copper or aluminum to be sold separately, causing the emission of poisonous gases.} The literature suggests that in some cases households or firms are encouraged to adopt new technologies by means of subsidies, even though the technologies have not proven themselves and have no economic justification (\citealp{gerarden2017assessing}; \citealp{fowlie2018energy}). In these circumstances, it is important to determine whether the benefit outweighs the cost before providing subsidization to these technologies (\citealp{keiser2019consequences}).\footnote{Recycling involves the use of energy in order to modify the waste for its new use and sometimes the recycled material does not replace other production but is added to existing production and encourages greater consumption. Only when the same level of production is maintained, based on greater usage of resources and more efficient use of energy, it is possible to obtain positive environmental outcomes (\citealp{zink2017circular}).}

Importantly, we do not suggest to refrain from IS altogether, but rather address potential inefficiencies and examine how to apply IS successfully. Our recommendation for policy-makers who favor the environment is not to assume that all IS initiatives are win-win strategies. Instead, carefully inspect each IS technology to examine if it is environmentally-harmful in the equilibrium. In case the answer is yes, efforts should focus on research and development to improve the effectiveness in the reuse of byproducts. Our comparative static pinpoints the share of reused byproducts as a key parameter, the only parameter in our framework that holds benefits for firms, consumers, and the environment. In contrast, other policies like targeting the pollution emitted in the production processes or environmental tax policies potentially generate a trade-off between the firms, consumers, and the environment. Thus, we argue that cautious design of the environmental policy may achieve sustainable economic growth in the equilibrium. 

More generally, our contribution lies in the introduction of a thorough framework to describe the countervailing effects of environmental endeavors and analyze their effect on the threshold levels for technology adoption and on the equilibrium outcomes, including the total surplus and the level of pollution. Then, considering the equilibrium outcomes and using comparative static, we map the cases where policy involvement is needed and make policy suggestions. 

The remainder of the paper is organized as follows. Section \ref{LiteratureReview} contains a literature review. Section \ref{TheModel} describes the economic framework, followed by the results of the model in Section \ref{Results}. Section \ref{Discussion} contains concluding remarks and discussion. Proofs are relegated to the Appendix to facilitate the reading. 

\section{Literature Review \label{LiteratureReview}}
An important question in the economic literature considers the economic feasibility of environmental behavior. %and the accompanying exposure to free riders. 
In other words, can a country or a society maintain growth and at the same time protect the environment? Can there be “sustainable environmental growth”? The traditional view holds that environmental regulations have a negative effect on competition, since they place an additional burden on companies. Companies face the direct costs of preventing pollution and the investment required from companies in order to comply with regulations will be at the expense of other profitable opportunities (\citealp{xie2017different}). 

In contrast, an alternative hypothesis, referred to as the Porter Hypothesis, asserts that when environmental regulations are well-designed they encourage innovation that improves the production process, providing a competitive advantage and offsetting the costs of complying with regulations, and therefore in fact contribute to growth (\citealp{porter1995toward}). Innovation facilitates consistent growth while reducing damage to the environment (\citealp{zhang2017can}; \citealp{fernandez2018innovation}).% Our results can be viewed as complementing the Porter Hypothesis, arguing that a country can maintain the economic growth while protecting the environment (\citealp{merli2018scholars}). 

Many studies have examined the Porter Hypothesis and its validity for various economies. For example, \citet{xie2017different}, \citealp{aghion2016carbon} and \citealp{acemoglu2016transition}) confirm the validity of the Porter Hypothesis in the case of market-based regulations in contrast to command-and-control regulations. \citet{franco2017effect} confirm the validity of the Porter Hypothesis in a number of European countries. %In contrast, \citet{he2020property} find that in the case of China environmental regulation often harms the financial performance of producers%. They explain the negative connection between regulation and financial results by the problem% 
%due to a lack of property rights, whereby it is not possible to guarantee that investments in innovation will pay for themselves. 
 \citet{van2017revisiting} find that the Porter Hypothesis is valid in the Dutch economy. While total factor productivity (TFP) increases when regulations encourage ecological innovation oriented toward savings and efficiency in the use of resources, ecological innovations intended to reduce pollution tend to reduce TFP. Our results parallel these findings in the sense that only if IS is highly effective in the reuse of byproducts, its implementation achieves a lower level of pollution as well as higher total surplus in the equilibrium.

\section{The Model \label{TheModel}}
%\subsection{Exogenous number of firms} \label{nfirms}
We assume that there are $n$ firms competing in the industry. All the companies are identical and  produce the same product. Each firm $i$ produces a quantity $q_i$ of the product and $Q=\sum_{i=1}^n q_i$ is the total quantity of the product. We make the standard assumption that the marginal cost of production increases with the number of units produced, $cq_i^2$, $c>0$. 
We also assume that the economy is competitive in that each firm alone cannot affect the market price. Let the demand function be given by 
\begin{equation} \label{demand}
    p=a-bQ,
\end{equation}
where $p$ denotes the price of the product and $a,b>0$ are exogenous parameters. 

The production process causes pollution. Pollution denotes waste or byproducts released to the air, water or ground, and we quantify it as a percentage of the final product. Denote the percentage of pollution created per unit of the product by $g$. Accordingly, the production of $Q$ units of the product in the market releases

\begin{equation} \label{poll}
   Poll=g Q
\end{equation}
units of pollution. Firms pay a pollution fine (or environmental tax), by the rate of $d$, such that the total fine paid by each producer is $d g q_i$.\footnote{In many cases, fines are paid on excess pollution. Then, $g$ can be viewed as the percentage of excess pollution above a certain standard, where aiming below the standard is not feasible for the firm. For a discussion on environmental taxes see \citet{acemoglu2016transition},  \citet{barrage2020optimal}.}  
Then, in benchmark case, without IS, the profit of each firm $i$ is given by
\begin{equation} \label{firm_rev}
    \pi_i=q_i(p-dg)-cq_i^2,
\end{equation}
and the consumer surplus, considering the demand function \eqref{demand}, is given by 
\begin{equation} \label{CS}
    CS=\frac{Q(a-p)}{2}, 
\end{equation}
Then, the total surplus is the sum of firms' profits and consumer surplus, 
\begin{equation} \label{CS}
    TS=n \pi_i+CS.
\end{equation}

IS is an environmental endeavor with a unique feature. It provides the firms an opportunity for extra profit if the cost of adoption is not too high. When a firm adopts IS, then it may ceteris paribus reduce pollution by selling a part of the byproducts of its production in the market (to recycle or reuse as raw materials in other production processes). 
 Denote the \textit{effectiveness of industrial symbiosis}, $\alpha$, $0<\alpha<1$, as the percentage of byproducts that can be reused in other production processes (or recycle). Accordingly, if IS is adopted, then a firm sells $\alpha g q_i$ units of byproduct. Consequently, ceteris paribus its level of pollution is reduced to $(1-\alpha) g q_i$.  Note that given the level of production, the pollution (or leftover pollution) is a function of both the percentage of pollution emitted in the production process $g$, or how clean the production technology is, as well as how effective the process of IS is $\alpha$. Thus, a production process that emits many pollutants ($g$ is high), but with proven technologies to use a higher percentage of the pollutants in other production processes ($\alpha$ is high), may be better for the environment than if there are fewer pollutants, but the effectiveness of IS is negligible.
 
 The market price of the byproducts is exogenously given by $p_g$. Accordingly, an IS transaction (the sale of the byproducts in the market) is a source of two gains for a firm. It increases the revenue of the company by $p_g$ and at the same time reduces the level of pollution, which reduces the fine. Nevertheless, implementation of IS involves a fixed cost on the firm, denoted by $c_g$. The cost of adoption of IS includes the cost of acquiring knowledge and searching for potential buyers of the byproducts, modification of the production infrastructure, transportation, logistics, or regulatory or bureaucratic burden.  Accordingly, after adopting IS the profit of firm $i$ is
\begin{equation} \label{symb_rev}
    %\pi_i^{symb}=q_i[p-dg(1-\alpha)+p_g \alpha g]-cq_i^2-c_g.
    \pi_i^{symb}=q_i^{symb}[p^{symb}-dg(1-\alpha)+p_g \alpha g]-c(q_i^{symb})^2-c_g.
\end{equation}

Another important parameter to consider is the share of pollution emitted in the process of IS, $k<1$. The process of IS itself may produce pollution of $k \alpha g q_i$,  emitted during transportation of byproducts, their modification to be reusable, or during their reuse in other production processes. Note that $k$ may even be negative if the IS sufficiently reduces the waste of raw materials in other production processes, because of the reuse of byproducts. Accordingly, if IS is adopted, the pollution is the sum of pollution emitted in the production process and the pollution emitted by IS 
\begin{equation} \label{symb_poll}
   Poll_{symb}=(\alpha  k +1-\alpha)g Q^{symb}.
\end{equation}

Recall that our firms are identical. We consider a symmetric equilibrium, where $q_1=q_2=\ldots =q_n=q$, $\pi_1=\pi_2=\ldots=\pi_n=\pi$ without IS, and $q_1^{symb}=q_2^{symb}=\ldots =q_n^{symb}=q^{symb}$, $\pi^{symb}_1=\pi^{symb}_2=\ldots=\pi^{symb}_n=\pi^{symb}$ with IS.

\section{Results \label{Results}}
We start from solving the equilibrium in a benchmark case without IS. Firms maximize their profits \eqref{firm_rev} to obtain the quantity in the equilibrium,

\begin{equation} \label{q_eq} 
    q=\frac{a-dg}{2c+bn}.
\end{equation}
Substituting the quantity produced \eqref{q_eq} in the demand function \eqref{demand} yields the market price 
\begin{equation} \label{p_eq}
    p=\frac{2ac+bndg}{2c+bn}.
\end{equation}
Using the quantity produced \eqref{q_eq} and the market price \eqref{p_eq}, we obtain the firm's profit, consumer surplus, total surplus, and the level of pollution in the equilibrium,
\begin{equation}\label{pi_eq}
    \pi=\frac{c(a-dg)^2}{(2c+bn)^2}.
\end{equation}
\begin{equation}\label{CS_eq}
    CS=\frac{bn^2[a-dg]^2}{2[2c+bn]^2}.
\end{equation}
\begin{equation} \label{ts_symb}
    TS=\frac{n(a-dg)^2}{2[2c+bn]}.
\end{equation}\begin{equation} \label{poll_eq}
    Poll=gq=\frac{g(a-dg)}{2c+bn}.
\end{equation}
We assume hereinafter that $a-dg>0$, such that in the benchmark equilibrium there is a market for the product, and thus firms' quantity \eqref{q_eq}, profit \eqref{pi_eq} and consumer surplus \eqref{CS_eq} are positive. 

It is easy to see from the equilibrium equations \eqref{q_eq}-\eqref{poll_eq} that a larger tax burden $dg$ reduces firms' profits, and thereby drives the firms to reduce production. As a result, the price rises and pushes consumer surplus downwards, and at the same time the pollution level declines. Thus, increasing the fine reduces the level of pollution through a reduction in production, which has a cost for consumers and firms in the equilibrium. 

Our next step is to introduce IS to the market as an alternative way to reduce pollution. Firms decide whether to deviate from the equilibrium and adopt IS based on profit considerations. If an individual firm deviates and adopts IS, given the quantity produced by each firm \eqref{q_eq} and the market price \eqref{p_eq}, its profit \eqref{symb_rev} is given by
\begin{equation} \label{gap_profit}
    \pi^{'}=\pi+q[\alpha g(d+p_g)]-c_g.
\end{equation}
%\begin{equation} \label{pi_dev1}
%    \pi^{'}=-c_g+\frac{[2ac-2cdg +\alpha g (p_g+d)(2c+bn)]^2}{4c(2c+bn)^2}.
%\end{equation}
According to equation \eqref{gap_profit}, when a firm decides whether to adopt IS, it weighs the cost of adoption $c_g$ against the profitability gain  $\alpha g(d+p_g)$ per unit of production. Given the reusable share of byproducts $\alpha g$, the company gains $p_g$ from the sale of byproducts in the market as well as a reduction in the fine $d$. Thus, the higher the profitability gain of IS $\alpha g(d+p_g)$ is, it is more likely that firms decide to adopt it. Clearly, if the cost of adoption is negligible $c_g=0$, implementing IS is profitable for the firms, $\pi^{'}> \pi$. Therefore, there is $c_g^{*}$ such that a firm prefers to deviate and adopt IS for $c_g<c_g^{*}$. Formally, a firm is better off by deviating to adopting IS if the cost of adoption $c_g$ is smaller than the profit gain,  
\begin{equation} \label{adoption_cond}
     \pi^{'}>\pi \Leftrightarrow c_g<q[\alpha g(d+p_g)].
\end{equation}
Thus, equation \eqref{adoption_cond} provides a sufficient condition to the adoption of IS. Substituting the quantity produced \eqref{q_eq} into inequality \eqref{adoption_cond}, we obtain that a firm adopts IS when the adoption cost is smaller than a threshold level $c^{*}_g$, where

\begin{defn} \label{cutoff_adoption}
\textit {Cutoff level of adoption cost.}
Define the cutoff level of adoption cost as
\[c^{*}_g=q[\alpha g(d+p_g)]=\frac{(a-dg)\alpha g(d+p_g)}{2c+bn}.\]  
\end{defn} 
Note that the cutoff level of adoption cost is calculated given the quantity produced by the firm in the benchmark case. If we further allow the deviating firm to change its production, then under $c_g<c^{*}_g$ the firm still prefers to deviate for IS. Revealed preference implies that given the price, the new quantity it chooses must increase its profits. Thus this cutoff level is a sufficient, but not a necessary, condition to a firm to deviate from an equilibrium where all firms do not adopt symbiosis.  

Moreover, proposition \ref{equilibrium} states that the cutoff level of adoption cost is a sufficient condition to existence of equilibrium where \textit{all} firms adopt symbiosis.
\begin{proposition} \label{equilibrium} \textit{Adoption of industrial symbiosis}
\newline Assume $c_g<c^{*}_g$. Then there is no symmetric equilibrium where all firms do not adopt IS. Moreover, there is equilibrium where firms adopt IS. 
The cutoff level of adoption cost $c^{*}_g$ increases in $\alpha$ and in $p_g$ and declines in $n$. 
\end{proposition}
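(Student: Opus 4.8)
The plan is to treat the three assertions in turn, the middle one (existence of an IS-adopting equilibrium) being the substantive one.

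\textbf{No-IS is not an equilibrium.} This follows almost directly from the adoption condition \eqref{adoption_cond}. Starting from the symmetric profile in which every firm produces the benchmark quantity \eqref{q_eq} and faces the price \eqref{p_eq}, I would let a single firm deviate to IS while holding its output fixed at $q$; by \eqref{gap_profit} its profit changes by $q\,\alpha g(d+p_g)-c_g$, which is strictly positive precisely because $c_g<c^{*}_g=q\,\alpha g(d+p_g)$. Allowing the deviator to re-optimize its output at the given price can only raise this gain (revealed preference, as already noted after Definition \ref{cutoff_adoption}). Hence the deviation is profitable and no symmetric profile without IS can be an equilibrium.

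\textbf{Existence of an IS-adopting equilibrium.} Here I would construct the candidate symmetric profile in which all firms adopt IS. Writing $\Delta:=\alpha g(d+p_g)$ for the per-unit profitability gain, the only change relative to the benchmark is that each adopter's effective per-unit margin rises from $p-dg$ to $p-dg+\Delta$ (compare \eqref{firm_rev} and \eqref{symb_rev}). Re-running the benchmark derivation --- each firm maximizing \eqref{symb_rev} taking the price as given, together with market clearing $p^{symb}=a-bnq^{symb}$ --- yields $q^{symb}=(a-dg+\Delta)/(2c+bn)>0$ and a corresponding price $p^{symb}$. It remains to check that no adopter wishes to switch back. The only relevant unilateral deviation is to drop IS (the chosen output is already optimal given adoption), so I would compare the equilibrium profit $\pi^{symb}$ with the best non-IS profit at the price $p^{symb}$ and show the resulting inequality is implied by $c_g<c^{*}_g$.

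\textbf{Main obstacle and comparative statics.} The crux is this last no-deviation inequality. The delicate point is that a firm abandoning IS will re-optimize its output downward (its margin falls by $\Delta$), so the honest comparison is between $\pi^{symb}$ and $\tfrac{1}{4c}(p^{symb}-dg)^2$ rather than the comparison at fixed output. Carrying out this reduction, the no-deviation requirement becomes $c_g\le \tfrac{1}{4c}\big[(p^{symb}-dg+\Delta)^2-(p^{symb}-dg)^2\big]$, and after substituting $p^{symb}-dg=(2c(a-dg)-bn\Delta)/(2c+bn)$ this right-hand side equals $c^{*}_g+(2c-bn)\Delta^{2}/\big(4c(2c+bn)\big)$. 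I expect this to be the main work: when $2c\ge bn$ the correction term is nonnegative and $c_g<c^{*}_g$ immediately suffices, whereas in the opposite regime one must either retain a commitment interpretation (the IS decision precedes the output choice, giving the cleaner sufficient bound $c_g\le q^{symb}\Delta$, which holds since $q^{symb}\Delta>c^{*}_g$) or fall back on existence of a partial-adoption equilibrium. Finally, the comparative statics are immediate from Definition \ref{cutoff_adoption}: since $a-dg>0$ and $g,d+p_g,b,c>0$, the cutoff $c^{*}_g=(a-dg)\alpha g(d+p_g)/(2c+bn)$ satisfies $\partial c^{*}_g/\partial\alpha>0$, $\partial c^{*}_g/\partial p_g>0$, and $\partial c^{*}_g/\partial n=-b(a-dg)\alpha g(d+p_g)/(2c+bn)^2<0$.
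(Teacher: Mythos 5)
Your proposal is correct and follows the same skeleton as the paper's argument --- rule out the all-non-IS profile via the fixed-output deviation \eqref{gap_profit}, construct the all-IS profile from the first-order conditions and market clearing, verify that no adopter wants to drop IS, and read the comparative statics off Definition \ref{cutoff_adoption} --- but on the critical no-deviation step you take a genuinely different, and in fact more demanding, route. The paper's appendix evaluates that deviation at \emph{fixed output}: in \eqref{pi_dev_symb} the firm that abandons IS keeps producing $q_{symb}$, which gives the threshold $c_g^{'}=q_{symb}\,\alpha g(d+p_g)$, and the conclusion follows from $q_{symb}>q$, hence $c_g<c^{*}_g<c_g^{'}$. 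You instead let the deviator re-optimize its output at the going price $p_{symb}$ --- the same revealed-preference courtesy the paper itself extends to the deviator in the first part of the proposition --- and your algebra is right: writing $\Delta=\alpha g(d+p_g)$, the exact threshold becomes $c^{*}_g+(2c-bn)\Delta^{2}/\bigl(4c(2c+bn)\bigr)$, which is at least $c^{*}_g$ when $2c\ge bn$ but strictly below it when $bn>2c$. What your route buys is precision about what the proposition actually requires: under the natural deviation notion (a firm may change both its IS status and its quantity), $c_g<c^{*}_g$ suffices for the all-IS equilibrium only in the regime $2c\ge bn$; when $bn>2c$ there are costs $c_g<c^{*}_g$ at which dropping IS and scaling output back is strictly profitable, so the stated claim survives only under the paper's fixed-output deviation convention (or by retreating to partial adoption, as you note). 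In other words, where your proof does not conclude, you have located a real gap in the paper's own proof rather than a defect in yours; your sufficiency direction is airtight (even in the corner $p_{symb}<dg$, where the deviator's best non-IS profit is zero rather than $(p_{symb}-dg)^2/4c$, your bound remains conservative).

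One small correction: your parenthetical rationale for the cleaner bound $c_g\le q_{symb}\Delta$ --- ``the IS decision precedes the output choice'' --- is backwards. That bound corresponds to the deviator's output being locked in \emph{before} the IS decision can be revisited; if IS were chosen first and output freely afterwards, the deviator would re-optimize, which is precisely your honest check, not the paper's.
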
 
A proof appears in the Appendix.

Therefore, $c_g<c^{*}_g$ is a sufficient condition for the adoption of IS. The last part of proposition \ref{equilibrium} indicates that three parameters positively affect the adoption of IS (by increasing the cutoff level of adoption cost $c^{*}_g$). First, IS becomes more profitable for the firms when the market price of byproducts $p_g$ rises. Second, when the effectiveness of IS $\alpha$ increases, the firm sells a larger share of its byproducts in the market, which provides both more profits and a reduction in the tax burden. Third, the market structure affects the adoption of IS, because IS is typically characterized by returns to scale. A smaller number of firms in the market $n$ implies that each firm has a larger market share, which makes  investing the initial fixed cost of adoption $c_g$ more worthwhile for each firm.

Figure \ref{fig:adoption} illustrates how the number of firms in the market $n$ (the X-axis) affects the profits of each firm (the Y-axis). The blue line presents the firm’s profits in the IS equilibrium and the red line presents the profits in the benchmark case. Surely, a firm’s profits decline in $n$ whether IS is implemented or not. As the number of firms rise, the revenues of each producer decline because of the reduction in its market share. However, the slope is steeper in the case of IS because of the returns to scale, which implies the following. When the number of producers is small, the revenues of each producer are sufficiently large to afford the fixed cost, which makes IS the preferable choice of each producer. Nevertheless, when the number of firms sufficiently rises (firms are sufficiently small), then adopting IS becomes less beneficial for a single firm than the alternative of not adopting IS. 

\begin{figure}
\includegraphics[scale=1]{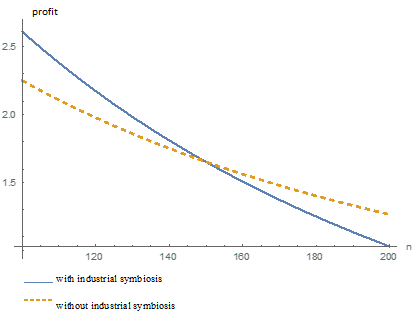}
\caption{Revenues of firms with and without industrial symbiosis. The number of firms in the market $n$ is on the X-axis and the revenues of each firm are on the Y-axis.}
\label{fig:adoption}
\end{figure}

Note that the condition $c_g<c^{*}_g$ stresses a well-known problem in the adoption of environmental solutions, or more generally, the adoption of technologies with external benefits to society, called “gap in energy efficiency”. In some cases, there are efficient technologies that facilitate a saving in energy and would benefit society, but nonetheless are not adopted. The reason is that the company adopting the technology exclusively bears the economic cost of the technology. This feature generates a gap between the benefit to society and the benefit to the firm, and thus the benefit to society is under-priced (\citealp{gerarden2017assessing}). An efficiency gap may call for government involvement to encourage the adoption of IS. 

The case of IS, however, is unique from other environmental endeavors in that firms have internal incentives to implement it and even increase production because of their profitability gain from selling the byproducts and from the reduction in fines. Nevertheless, there is no guarantee that the profitability gain from IS makes the firm fully internalize the benefits to society. In fact, in the case of IS, in addition to the usual problem of under-adoption, the profitability gain may cause a problem of over-adoption. That is, firms may adopt ineffective and highly-polluting technologies of IS that will increase the damage to the environment in the equilibrium. We extend on this point after we solve the equilibrium with IS.  
 
Next, we solve the equilibrium with IS in a similar way to the benchmark case. Maximizing the firm's profit \eqref{symb_rev}, we obtain the quantity in the equilibrium,
\begin{equation} \label{q_symb}
q_{symb}=\frac{a-dg+\alpha g(d+p_g)}{2c+bn}.
\end{equation}
Substituting the quantity produced \eqref{q_symb} in the demand function \eqref{demand} yields the market price 
\begin{equation} \label{p_symb}
p_{symb}=\frac{2ac+bn[dg-\alpha g(d+p_g)]}{2c+bn}.
\end{equation} 
We relegate the solution of the rest of the equilibrium ($CS_{symb}$, $\pi_{symb}$, and $TS_{symb}$) to the Appendix. We further verify in the Appendix that a single firm does not wish to deviate from the IS equilibrium after it is implemented, namely given that the adoption cost is lower than the threshold $c^{*}_g$ (recall definition \eqref{cutoff_adoption}).

 Next, we analyze the implications of IS. To facilitate notation, denote the gaps between the benchmark case and the IS equilibrium in production, price, consumer surplus, firm's profit, and total surplus by $q_{symb}-q=\Delta q$, $p-p_{symb}=\Delta p$, $CS_{symb}-CS=\Delta CS$, $\pi_{symb}-\pi=\Delta \pi-c_g$, $TS_{symb}-TS=\Delta TS$, respectively. In proposition \ref{comparison}, we summarize the comparison between the two equilibrium, 

\begin{proposition} \label{comparison} \textit{Comparison of equilibrium with and without industrial symbiosis}
\newline (1) $q_{symb}>q$, $p_{symb}<p$, $CS_{symb}>CS$. 
\newline (2) Assume $c_g<c^{*}_g$ and $\alpha g(d+p_g)c>bn(a-dg)$. Then, $\Delta \pi>0$, $\Delta TS>0$. 
\newline (3) $\Delta q$, $\Delta p$, $\Delta CS$, $\Delta \pi$, $\Delta TS$ increase in $\alpha$, $d$, and $p_g$. $\Delta q$, $\Delta p$ and $\Delta CS$ increase in $g$.
\end{proposition}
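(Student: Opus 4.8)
The plan is to reduce every quantity to the two equilibrium outputs $q$ and $q_{symb}$, so that all five gaps become elementary functions of them. Writing $\Phi:=\alpha g(d+p_g)>0$ and $D:=2c+bn$, equations \eqref{q_eq} and \eqref{q_symb} give $q=(a-dg)/D$ and $q_{symb}=(a-dg+\Phi)/D$, hence $q_{symb}-q=\Phi/D>0$. The first-order conditions behind \eqref{q_eq} and \eqref{q_symb} yield the identities $\pi=cq^2$ and $\pi_{symb}+c_g=cq_{symb}^2$ (consistent with \eqref{pi_eq}), while the consumer-surplus definition together with $p=a-bnq$ gives $CS=\tfrac{bn^2}{2}q^2$ and likewise $CS_{symb}=\tfrac{bn^2}{2}q_{symb}^2$ (consistent with \eqref{CS_eq}); summing, $TS=\tfrac{nD}{2}q^2$ and $TS_{symb}=\tfrac{nD}{2}q_{symb}^2-nc_g$ (consistent with \eqref{ts_symb}). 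Once these are in hand, each gap is proportional to $q_{symb}^2-q^2$ up to the additive constants $c_g$ and $nc_g$, which is the workhorse of the whole argument.

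Part (1) is then immediate: $q_{symb}>q$ because $\Phi>0$; $\Delta p=p-p_{symb}=bn(q_{symb}-q)>0$; and $\Delta CS=\tfrac{bn^2}{2}(q_{symb}^2-q^2)>0$ since $q_{symb}>q>0$. For Part (2) I would carry out two threshold comparisons. From the identities above, $\Delta TS=\tfrac{nD}{2}(q_{symb}^2-q^2)-nc_g$ while the net profit change is $\pi_{symb}-\pi=c(q_{symb}^2-q^2)-c_g$. Using $q_{symb}^2-q^2=\Phi\big(2(a-dg)+\Phi\big)/D^2$ and the definition $c^*_g=(a-dg)\Phi/D$, the $TS$ threshold equals $c^*_g+\Phi^2/(2D)>c^*_g$, so the single hypothesis $c_g<c^*_g$ already forces $\Delta TS>0$. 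The profit threshold is $c(q_{symb}^2-q^2)$, and clearing denominators shows it exceeds $c^*_g$ exactly when $c\Phi\ge bn(a-dg)$ — precisely the second hypothesis — so that under both assumptions the firm's profit net of the adoption cost rises, i.e. $\pi_{symb}>\pi$, for every admissible $c_g<c^*_g$.

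Part (3) splits according to how each parameter enters. The parameters $\alpha$ and $p_g$ affect the gaps only through $\Phi$; since $q$ is independent of $\Phi$ while $q_{symb}-q=\Phi/D$ and $q_{symb}^2-q^2=\Phi(2(a-dg)+\Phi)/D^2$ are both increasing in $\Phi$, every gap is increasing in $\alpha$ and $p_g$. The same is true of $\Delta q=\Phi/D$ and $\Delta p=bn\Phi/D$ in $d$ and $g$, because the benchmark driver $a-dg$ cancels in the first difference. The main obstacle is the behaviour of the level gaps $\Delta CS,\Delta\pi,\Delta TS$ in $d$ and of $\Delta CS$ in $g$: here $d$ and $g$ enter both $a-dg$ (decreasing) and the premium $\Phi$ (increasing), so $q_{symb}^2-q^2$ is a difference of opposing terms. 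Differentiating, the sign of $\partial(q_{symb}^2-q^2)/\partial d$ reduces to that of $\alpha(a-dg)-(1-\alpha)g(d+p_g)$, and $\partial(q_{symb}^2-q^2)/\partial g$ reduces to a similar expression. I expect the decisive step to be signing these, for which the maintained assumption $a-dg>0$ must be combined with $0<\alpha<1$ and, quite possibly, an additional restriction (for instance the inequality already invoked in Part (2)). Accordingly I would first dispatch the clean $\alpha$- and $p_g$-directions, then isolate exactly which auxiliary inequality the $d$- and $g$-comparative statics of the level gaps require.
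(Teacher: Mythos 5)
Parts (1) and (2) of your proposal are correct and essentially retrace the paper's own argument: the paper also obtains part (1) from $\Delta q=\Phi/D>0$ (in your notation $\Phi=\alpha g(d+p_g)$, $D=2c+bn$) and part (2) by showing that the profit threshold $c(q_{symb}^2-q^2)$ exceeds $c_g^{*}$ exactly when $c\Phi>bn(a-dg)$. Your additional observation that the total-surplus threshold equals $c_g^{*}+\Phi^2/(2D)$, so that $\Delta TS>0$ follows from $c_g<c_g^{*}$ alone, is a small but genuine sharpening; the paper instead deduces $\Delta TS>0$ from $\Delta\pi>0$ and $\Delta CS>0$ and therefore needs both hypotheses.

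The point at which you stopped in part (3) is not a defect of your approach: it is exactly where the paper's own proof breaks down. Writing $A=a-dg$, the correct derivatives of the common factor are $\frac{\partial}{\partial d}\left[\Phi^2+2\Phi A\right]=2\alpha g\left[A-(1-\alpha)g(d+p_g)\right]$ (your bracket misplaces one factor of $\alpha$, but the conclusion is unchanged) and $\frac{\partial}{\partial g}\left[\Phi^2+2\Phi A\right]=2\alpha(d+p_g)\left[\Phi+a-2dg\right]$. Neither bracket is signed by the maintained assumptions $a-dg>0$, $0<\alpha<1$, $g<1$: for instance $a=1$, $d=1$, $g=0.9$, $\alpha=0.1$, $p_g=1$ gives $A-(1-\alpha)g(d+p_g)=0.1-1.62<0$ and $\Phi+a-2dg=0.18-0.8<0$, so $\Delta\pi$, $\Delta CS$ and $\Delta TS$ strictly decrease in $d$ (and $\Delta CS$ in $g$) there; moreover the part-(2) condition $c\Phi>bn(a-dg)$ can simultaneously be met by taking $c$ large, so no assumption stated in the paper rescues the claim. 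The paper's proof asserts $\frac{\partial \Delta\pi}{\partial d}=\frac{2c\alpha g[(a-dg)+(d+p_g)(1-g)]}{(2c+bn)^2}$, which is an algebra error (the cross term should be $-(1-\alpha)g(d+p_g)$, not $+(d+p_g)(1-g)$), and its expression $\Delta CS=\frac{b[n\Phi]^2}{2(2c+bn)^2}$ computes $\frac{bn^2}{2}(q_{symb}-q)^2$ rather than $\frac{bn^2}{2}(q_{symb}^2-q^2)$, dropping the cross term $2\Phi A$; only under these erroneous formulas do the $d$- and $g$-monotonicity claims appear unconditional. So your proved pieces (monotonicity in $\alpha$ and $p_g$ of all gaps, and of $\Delta q$, $\Delta p$ in all parameters) stand, and the pieces you flagged as needing an extra restriction are in fact false as stated: they hold only on the parameter region where the two brackets above are positive.
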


According to proposition \ref{comparison}, the adoption of IS leads to an increase in the quantity produced. When firms can sell a share of $\alpha$ of their byproducts for reuse, it becomes more profitable to increase production both because of the price they receive per unit $p_g$ and because of the reduction in fines. In other words, the higher the profitability gain of IS $\alpha g(d+p_g)$ is, the larger is the incentive of firms to increase production. The increased supply leads to a reduction in the price of the product, which is beneficial for consumers. Therefore, the profitability gain of IS $\alpha g(d+p_g)$ plays a crucial role for the total surplus of firms and consumers. 

To complete the analysis, we turn to measure how IS affects the level of pollution. Recall that the primer goal of IS is to reduce pollution. To examine whether this goal is achieved in the equilibrium, we compare the pollution with and without IS (equations \eqref{poll} and \eqref{symb_poll}). To facilitate notation, let us define the gap in pollution caused by IS as $\Delta Poll=Poll_{symb}-Poll$,  
\begin{equation} \label{comp_poll}
     \Delta Poll= gn[(q_{symb}-q)-q_{symb} \alpha (1-k)]
\end{equation}
which implies,  
\begin{equation} \label{comp1_poll}
     \Delta Poll>0\Leftrightarrow (q_{symb}-q)>q_{symb} \alpha (1-k).
\end{equation}

Inequality \eqref{comp1_poll} indicates that two counter effects play an essential role in determining how IS affects pollution.
First, focusing on the right-hand side of inequality \eqref{comp1_poll}, IS is entitled to reduce pollution depending on the quality of the technology $\alpha (1-k)$ (that in turn depends on its effectiveness $\alpha$ and on how polluting the process is $k$). A sufficiently effective IS (a large percentage of reusable byproducts $\alpha$) and a more green one (emits a lower level of pollution $k$), successfully reduces the level of pollution relative to the benchmark case. However, there is a counter effect of IS on pollution, emanating from the decisions of firms in the equilibrium. This effect is captured by the left-hand side of inequality \eqref{comp1_poll}. Paradoxically, the profitability gain of IS (from selling byproducts $p_g$ and from the lower fines) induces an increase in production ($q_{symb}>q$ by proposition \ref{comparison}). The increased production is liable to bring about more pollution and more harm to the environment. This somewhat surprising negative effect of IS on pollution overcomes the first effect for certain parameters of the model, i.e., in certain industries and  for certain technologies, 

\begin{proposition} \label{pollution} \textit{Comparison of pollution with and without industrial symbiosis}
\newline (1) $\Delta Poll>0$, if $1-k\rightarrow 0$.
\newline (2) $\Delta Poll>0 \Leftrightarrow \frac{1}{1-k}-\alpha>\frac{a-dg}{g(d+p_g)}$. Else, $\Delta Poll \leq 0$. 
\newline (3) $\Delta Poll$ increases in $g,d,p_g$, and $k$, and declines in $\alpha$.
%Assume that $c_g<c^{*}_g$. If $p_g>\frac{(a-dg)(1-k)-gd(\alpha k+1-\alpha)}{g(\alpha k +1-\alpha)}$,  $Poll<Poll_{symb}$. %For $k<1$, $Poll_{symb}$ decreases in $\alpha$.
\end{proposition}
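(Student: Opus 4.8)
The plan is to reduce all three parts to a single closed form for $\Delta Poll$ and then handle them as (i) a limit, (ii) a sign-preserving manipulation, and (iii) five derivative-sign computations. First I would substitute the equilibrium quantities into \eqref{comp_poll}. Using $q_{symb}-q=\dfrac{\alpha g(d+p_g)}{2c+bn}$ (the numerators of \eqref{q_symb} and \eqref{q_eq}) and $q_{symb}=\dfrac{a-dg+\alpha g(d+p_g)}{2c+bn}$, I get
\[
\Delta Poll=\frac{gn\alpha}{2c+bn}\Big[g(d+p_g)\big(1-\alpha(1-k)\big)-(1-k)(a-dg)\Big],
\]
which is the object every part acts on.

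For part (1), directly from \eqref{comp_poll} the term $q_{symb}\alpha(1-k)$ vanishes as $1-k\to 0$, so $\Delta Poll\to gn(q_{symb}-q)>0$, the positivity coming from $q_{symb}>q$ in Proposition \ref{comparison}(1). For part (2) I would start from the equivalence \eqref{comp1_poll} and clear denominators: cancel the common $2c+bn>0$, then divide successively by $\alpha>0$, by $1-k>0$ (here $k<1$ is essential), and by $g(d+p_g)>0$. Each divisor is strictly positive, so every inequality is preserved, and rearranging $g(d+p_g)\big(\tfrac{1}{1-k}-\alpha\big)>a-dg$ gives exactly $\tfrac{1}{1-k}-\alpha>\tfrac{a-dg}{g(d+p_g)}$; negating yields the ``else'' clause.

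Part (3) is comparative statics on the closed form, and two of the five signs fall out at once from the standing assumptions $a-dg>0$, $0<\alpha<1$, $k<1$: differentiating in $k$ gives $\partial_k\Delta Poll=gn\alpha\,q_{symb}>0$, and differentiating in $d$ gives $\partial_d\Delta Poll=\dfrac{g^2 n\alpha}{2c+bn}\big[1+(1-k)(1-\alpha)\big]>0$. I would present these first, as they need nothing beyond the maintained hypotheses.

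The main obstacle is the remaining three derivatives, whose signs are \emph{not} determined by the standing assumptions alone. One computes $\partial_{p_g}\Delta Poll\propto 1-\alpha(1-k)$, $\partial_g\Delta Poll\propto 2g(d+p_g)\big(1-\alpha(1-k)\big)-(1-k)(a-2dg)$, and, crucially, $\partial_\alpha\Delta Poll\propto g(d+p_g)\big(1-2\alpha(1-k)\big)-(1-k)(a-dg)$. The $p_g$-sign is positive exactly when the net per-unit coefficient $\alpha k+1-\alpha=1-\alpha(1-k)$ is positive (automatic once $k\ge 0$), so I would carry the maintained hypothesis $\alpha(1-k)<1$. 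The $\alpha$-derivative is the delicate case: $\Delta Poll$ is a downward parabola in $\alpha$ vanishing at $\alpha=0$ and at the part-(2) threshold, so its slope changes sign at $\alpha^{*}=\tfrac12\big(\tfrac{1}{1-k}-\tfrac{a-dg}{g(d+p_g)}\big)$, and ``$\Delta Poll$ declines in $\alpha$'' holds only once reuse is effective enough ($\alpha\ge\alpha^{*}$, in particular throughout the beneficial regime $\Delta Poll\le 0$); symmetrically, the $g$-sign turns on $a-2dg$ and need not be positive for small $g$. I therefore expect the honest version of part (3) to require restricting to the economically relevant regime together with $\alpha(1-k)<1$, and I anticipate the bulk of the work will be isolating precisely the hypotheses under which all three of these signs hold simultaneously.
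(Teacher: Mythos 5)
Your computations are all correct, and on parts (1) and (2) your argument is exactly the paper's: part (1) reads off \eqref{comp1_poll} together with $q_{symb}>q$ from Proposition \ref{comparison}, and part (2) substitutes \eqref{q_eq} and \eqref{q_symb} into \eqref{comp1_poll} and rearranges through strictly positive factors. Where you diverge is part (3), and your diagnosis of the difficulty is the right one. The paper's entire proof of part (3) is the clause ``deriving this inequality by $g,d,p_g,\alpha$, and $k$'': it differentiates the \emph{sign condition} of part (2), i.e.\ the expression $f=\frac{1}{1-k}-\alpha-\frac{a-dg}{g(d+p_g)}$, whose partials do have the claimed signs unconditionally: $\partial_\alpha f=-1<0$, $\partial_k f=(1-k)^{-2}>0$, $\partial_g f=a/[g^{2}(d+p_g)]>0$, $\partial_{p_g}f=(a-dg)/[g(d+p_g)^{2}]>0$, and $\partial_d f=[g(d+p_g)+(a-dg)]/[g(d+p_g)^{2}]>0$.

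But since $\Delta Poll=\frac{g^{2}n\alpha(1-k)(d+p_g)}{2c+bn}\,f$, with a prefactor that itself depends on $g,\alpha,k,d,p_g$, monotonicity of $f$ only says that the \emph{threshold} for $\Delta Poll>0$ moves monotonically in the parameters; it does not deliver monotonicity of $\Delta Poll$ itself. Your derivatives show the literal statement can fail: $\partial_\alpha\Delta Poll\propto g(d+p_g)\bigl(1-2\alpha(1-k)\bigr)-(1-k)(a-dg)$ is strictly positive for $k$ near $1$ (your downward-parabola picture with vertex at $\alpha^{*}$, half the part-(2) root, is exactly right), and the $g$- and $p_g$-signs likewise need hypotheses beyond $a-dg>0$, $0<\alpha<1$, $k<1$. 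So your proposal is not deficient; it proves (1), (2), and the unconditional $d$- and $k$-monotonicity, and it exposes that the paper's part (3), read literally as comparative statics of $\Delta Poll$, is not what the paper's one-line argument establishes (and is false in general). What the paper's route buys is all five signs at once, but only for the weaker ``threshold'' reading; what your route buys is an honest statement of exactly which extra restrictions (e.g.\ $\alpha(1-k)<1$ and $\alpha\ge\alpha^{*}$) are needed for the literal claim. If you write this up, say explicitly which of the two statements you are proving.
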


The first part of proposition \ref{pollution} focuses on the quality of the process of IS. When IS is of a very low quality because the process itself is very polluting, $k\rightarrow 1$, then the technology of IS almost cannot reduce pollution in the equilibrium. In this case, a technology barrier in the process of IS brings more pollution when companies adopt IS than in the benchmark. The second part of proposition \ref{pollution} presents a necessary and sufficient condition for an increased pollution when IS is implemented. If the quality level of IS $1-k$ is sufficiently low relative to the increased production caused by the augmented profitability that IS brings, then IS boosts the level of pollution in the equilibrium. Otherwise, IS reduces pollution and improves the quality of the environment. The last part of proposition \ref{pollution} disassembles the two effects of IS on pollution to their components. On the left-hand side we have the technology parameters, $\alpha$ and $k$, and right-hand side contains parameters of firms' profitability gain from selling their byproducts in the market and from the reduction in the tax burden. Specifically, the profitability gain of firms increases when production is more polluting $g$, or when the price of byproducts $p_g$ or the tax rate $d$ are larger.

Important policy implications emanate from our propositions and specifically from the comparative static. Adopting IS may create a trade off between the total surplus and pollution. For example, focusing on byproducts with a high price in the market (large $p_g$), or increasing the fine on pollution $d$, increases the profits of firms relative to the benchmark case $\Delta \pi$, and in turn increases consumer surplus $\Delta CS$ and the total surplus $\Delta TS$ when the industry switches to IS. However, at the same time these parameters increase the level of pollution when IS is implemented $\Delta Poll$, because of the augmented production $\Delta q$. Therefore, at the end the environment may be worse off by IS due to high levels of parameters that compose the profitability gain, particularly high $g(d+p_g)$. 

Surely, the social welfare should consider both the total surplus and the level of pollution. Therefore, our results suggest that policy-makers who wish to direct the market into a larger surplus and a lower level of pollution should focus on supporting advances in the quality of IS $\alpha (1-k)$, namely the effectiveness and cleanness of the process. That is, policy-makers should insist on and encourage the implementation of IS technologies that are more effective and green and invest in the innovation of such technologies. In contrast to the other parameters, a larger quality of IS reduces the level of pollution without generating a trade off between the total surplus and pollution. Moreover, the effectiveness of IS (the percentage of byproducts that can be reused, $\alpha$) is a key parameter that benefits all parties. A larger $\alpha$ generates a larger surplus for firms and consumers as well as a lower level of pollution. Therefore, our results suggest that policy effort should concentrate on increasing the quality of IS in order to reduce the level of pollution without harming the total surplus. 

To illustrate our results, and specifically the countervailing effects of IS and the key importance of the effectiveness of IS, we present three simulations of the model\footnote{A preliminary sketch of the model and simulations appear in Hebrew in Heth Foundation series Mehkarei Regulazia}. In each simulation, we calculate the level of pollution (the Y-axis) for different levels of prices of the byproducts in the market $p_g$ (the X-axis) with IS (the blue line) and without IS (the red line). The only difference between the simulations is the level of $\alpha$. In figure \ref{badtech}, the technology of IS is of low effectiveness, only 1\% of the byproducts can be reused $\alpha=0.01$; in figure \ref{goodtech}, we take high effectiveness of IS, 90\% of the byproducts are reusable, $\alpha=0.9$; and in figure \ref{medtech} we assume medium effectiveness of the process, 50\% reusable byproducts, $\alpha=0.5$.

\begin{figure}
\includegraphics[scale=1]{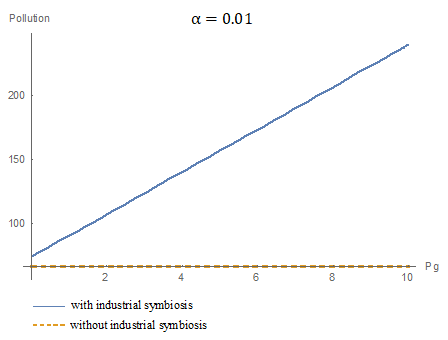}
\caption{Pollution with and without IS for $\alpha=0.01$. The market price of byproducts in the market $p_g$ is on the X-axis and the pollution is on the Y-axis.}
\label{badtech}
\end{figure}

\begin{figure}
\includegraphics[scale=1]{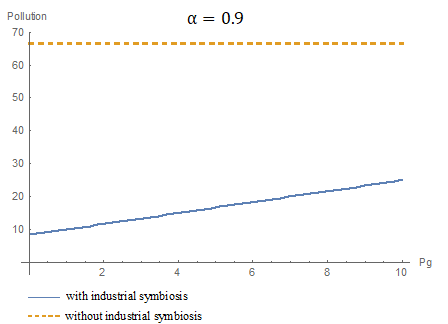}
\caption{Pollution with and without IS for $\alpha=0.9$. The market price of byproducts in the market $p_g$ is on the X-axis and the pollution is on the Y-axis.}
\label{goodtech}
\end{figure}

\begin{figure}
\includegraphics[scale=1]{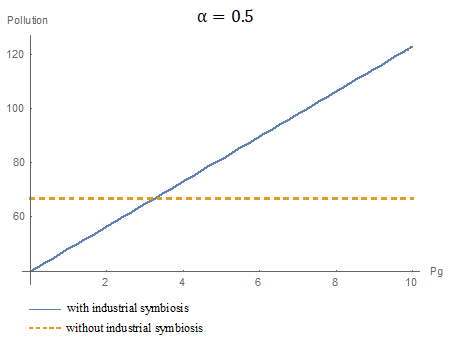}
\caption{Pollution with and without IS for $\alpha=0.5$. The market price of byproducts in the market $p_g$ is on the X-axis and the pollution is on the Y-axis.}
\label{medtech}
\end{figure}

In figure \ref{badtech} the IS is highly ineffective, such that byproducts can hardly be reused. Therefore, the  implementation of IS increases the level of pollution because of the increased production, for any price of the byproducts. 
Figure \ref{goodtech} presents the opposite case, high effectiveness of IS, where 90\% of byproducts can be reused. With highly effective process, the implementation of IS compensates for the increased production and reduces the level of pollution, for any price of the byproducts. 

Comparing figures \ref{badtech}-\ref{goodtech}, in both figures when IS is implemented a rise in the market price of byproducts always augments profitability leading to increased production, and thereby to more pollution. However, in figure \ref{goodtech} the slope is less steep. The reason is that a highly effective IS reuses most of the pollution (90\%), reducing the leftover pollution relative to the case in figure \ref{badtech} of ineffective process (with only 1\% of reusable byproducts).  

Figure \ref{medtech} illustrates the offsetting effects of IS on pollution in an industry with medium effectiveness of IS, 50\% reuse of byproducts. In this case, the effect of IS on the level of pollution depends on the price of byproducts. If the price is low, then implementation of IS reduces pollution, whereas a high price reverses the result. The reason is that a high price of byproducts is highly profitable for the firms, causing a massive boost in production as well as pollution, and thus harms the environment relative to the benchmark case. 

It is important to note that the technology parameters of IS are expected to enhance over time with the development of new technologies friendlier to the environment. In practice, standards of emissions for older factories are less rough and allow the use of the 'Best Practical Technology' (BPT) so the industry remains profitable. Over time, as the technological barriers decline, the standards are upgraded and factories are enforced to use the ‘Best Available Technology', BAT, the average between the more polluting technologies and   greener advanced technologies in newer factories. Accordingly, we can interpret figures \ref{badtech}-\ref{medtech} as describing the effect of technological changes leading to improvements in the effectiveness of IS. Figure \ref{badtech} describes an initial low effective IS with augmented pollution in the equilibrium. In time, investment in R\&D promotes the innovation of more effective technologies, where IS may increase or decrease pollution depending on the price of byproducts in the market, as Figure \ref{medtech} describes. Eventually, technologies in the industry are sufficiently environment-friendly, such that their adoption reduces pollution, as Figure \ref{goodtech} illustrates. 

Thus, a lesson from our results is that IS should be adopted with caution,  considering all available IS technologies and their effect on the equilibrium. When the IS technology is sufficiently green and effective, the regulator's role should focus on reducing the cost of adoption to assure its implementation. However, in certain industries with low quality of available technologies, policy should provide incentives for technological improvements, so that in the equilibrium, considering the reaction of all parties involved, IS will satisfy its entitled goal - to enhance sustainability.  
%\subsection{Market with entrance of firms}
%Consider that number of firms in the market is not limited. New firms can enter after paying a fixed \textit{entrance cost} $c_E>0$. The competition proceeds then as in section \ref{nfirms}. For simplicity the number of firms is not necessary integer.

%In this kind of competition firm's revenue is zero. Proposition below states, that as the entrance cost is higher, firms have more incentive to adopt the industrial symbiosis. Customer surplus and the pollution decrease with entrance cost. There is a point of discontinuity, where firms adopt industrial symbiosis, and then, again, customer surplus and the pollution decrease with entrance cost.   
%\begin{proposition} \label{eq_entrance}
 %   Consider a symmetric equilibrium. There is $\bar{c_E}$, such that for $c_E<\bar{c_E}$, industrial symbiosis is not adopted. For $c_g$ sufficiently low, for $c_E>\bar{c_E}$, industrial symbiosis is adopted. 
  %  If $c_E<\bar{c_E}$, $CS$ and $Poll$ decrease in $c_E$. Also if $c_E>\bar{c_E}$, $CS$ and $Poll_{symb}$ decrease in $c_E$.
%\end{proposition}
\section{Conclusion and discussion \label{Discussion}}
Our study shows that industrial symbiosis (IS) is not a "magic word" for sustainability. The fact that IS is adopted does not guarantee the reduction of environmental harm.  In choosing a legal platform for IS, it is important to examine all the effects in the equilibrium, including how green and effective the suggested process is and its effect on the level of production in the equilibrium. A thorough analysis of the technologies in the market is crucial to nail down industries where IS can improve sustainability, as opposed to industries where IS may harm the environment, thereby further technological progress is necessary before implementation.   

Our findings show that IS is beneficial for the environment in case the quality of the process is sufficiently high (the technology of IS is sufficiently green and effective, such that a high proportion of byproducts is converted for circular use). Then, although firms boost production (and pollution) because of their profitability gain (from selling their byproducts and from saving in environmental tax), a high quality IS can compensate for the extra pollution and reduce the level of pollution in the equilibrium. In this case, policy should focus on reducing the cost of adoption in order to guarantee the  implementation of IS.    

In contrast, a low quality IS technology may harm the environment because it cannot offset the negative effect of the increased production on the environment. In this case, our policy recommendation is not to encourage  adoption of the existing IS technologies, but rather to incentivize the research and development of green and effective technologies that will make IS  environmentally-friendly in the equilibrium. 

Our results suggest that encouraging the development of technologies that reuse an augmented share of byproducts increases the total surplus and enhances sustainable economic growth, in contrast to other suggested policies. Incentives for R\&D may include government subsidies, environmental taxes (market-based regulation) or other methods, like enforcing companies to use part of their profits for R\&D or imposing higher environmental standards on firms (see discussion on command-and-control mechanisms in \citealp{duflo2018value}). While in the absence of IS the regulator may compromise on Best Practicable Technology (BPT) to alleviate the economic burden on a company, an increase in the firms' profitability following the adoption of IS provides an opportunity to require the implementation of he Best Available Technology (BAT), or the development of even greener and more effective technologies. For a related literature on the advantages of market-based regulations over command-and-control regulations see the end of Section \ref{LiteratureReview}. Another way to push firms to develop these technologies is informal regulation, such as providing information to the public by announcing the names of polluting companies in the media or consumer boycotts. It appears that there is willingness among the public to be part of the effort to protect the environment. In Switzerland, about 70 percent of consumers would prefer goods to be produced with less carbon emissions and a large proportion would even agree to pay more for them (\citealp{blasch2014context}). 

\bibliography{circ}
\bibliographystyle{chicago}

\section*{Appendix}
\textbf{The equilibrium when industrial symbiosis is adopted.} 
Using the quantity produced \eqref{q_symb} and the market price \eqref{p_symb}, we obtain the firm's profit, consumer surplus, total surplus, and the level of pollution in the equilibrium, respectively,
\begin{equation}\label{pi_symb_eq}
    \pi_{symb}=\frac{c[a-dg+\alpha g(d+p_g)]^2}{[2c+bn]^2}-c_g
\end{equation}
\begin{equation*} \label{cs_symb}
    CS_{symb}=\frac{bn^2[a-dg+\alpha g(d+p_g)]^2}{2[2c+bn]^2}
\end{equation*}
\begin{equation} \label{ts_symb}
    TS_{symb}=\frac{n[a-dg+\alpha g(d+p_g)]^2}{2[2c+bn]}-nc_g
\end{equation}
\begin{equation} \label{poll_symp_eq}
     Poll_{symb}= \frac{(\alpha  k +1-\alpha)ng[a-dg+\alpha g(d+p_g)]}{2c+bn}.
\end{equation}

Next, we verify that a single firm does not wish to deviate from the IS equilibrium. 
Given the quality produced \eqref{q_symb} and the market price \eqref{p_symb} in the IS equilibrium, the profit of a firm that deviates to not adopting IS is given by
\begin{equation} \label{pi_dev_symb}
    \pi_{symb}^{'}=\pi_{symb}-q_{symb}[\alpha g(d+p_g)]+c_g.
\end{equation}
%\begin{equation} \label{pi_dev2}
 %   \pi_{symb}^{'}=\frac{[2ac-\alpha g b n (d+p_g)-2cgd ]^2}{4c(2c+bn)^2}.
%\end{equation}
According to equation \eqref{pi_dev_symb}, for $c_g=0$, $\pi_{symb}> \pi_{symb}^{'}$, therefore there is a threshold level  $c_g^{'}$, such that a firm prefers not to deviate for $c_g<c_g^{'}$.
Accordingly, a firm is better off by not deviating if 
\begin{equation} \label{adoption_cond_symb}
     \pi_{symb}> \pi_{symb}^{'} \Leftrightarrow c_g<c_g^{'}.
\end{equation}
where $c_g^{'}=q_{symb}[\alpha g(d+p_g)]$. Next, we prove that $c_g<c_g^{'}$ is not binding, given our assumption in proposition \ref{equilibrium} that $c_g<c^{*}_g$, where $c^{*}_g=q[\alpha g(d+p_g)$ (recall definition \eqref{cutoff_adoption}). From the equilibrium  equations \eqref{q_eq} and \eqref{q_symb}, we deduce that $q_{symb}>q$, because $q_{symb}=q+\Delta q$ and $\Delta q=\frac{\alpha g(d+p_g)}{2c+bn}>0$. As a result, $c^{*}_g<c_g^{'}$. This ends the proof with no further assumption needed, because $c_g<c^{*}_g$ and $c^{*}_g<c_g^{'}$ imply that $c_g<c_g^{'}$, and no firm deviates. 

\begin{proof}[Proof of proposition \ref{equilibrium}]
A firm deviates and adopts IS when $\pi^{'}>\pi$. Substituting the quantity produced \eqref{q_eq} in the inequality \eqref{adoption_cond}, we obtain that firms adopt IS when $c_g<c^{*}_g$, where $c^{*}_g=(\frac{a-dg}{2c+bn})[\alpha g(d+p_g)]$. As we show above, $c_g<c^{*}_g$ is sufficient to existence of equilibrium where all firms adopt IS.  Additionally, it is easy to see that $c^{*}_g$ monotonically increases in $\alpha$ and in $p_g$ and declines in $n$. 
\end{proof}

\begin{proof}[Proof of proposition \ref{comparison}]
\begin{enumerate}
    \item Using the equilibrium equations with and without IS, we obtain that $q_{symb}>q$ because $\Delta q=\frac{\alpha g(d+p_g)}{2c+bn}>0$,
$p_{symb}<p$, because $\Delta p=\frac{bn[\alpha g(d+p_g)]}{2c+bn}>0$, and
$CS_{symb}>CS$, because $\Delta CS=\frac{b[n\alpha g(d+p_g)]^2}{2[2c+bn]^2}>0$.
\item $\pi_{symb}-\pi=\Delta \pi-c_g$, where $\Delta \pi=\frac{c[(\alpha g(d+p_g))^2+2\alpha g(d+p_g)(a-dg)]}{[2c+bn]^2}$. 
Therefore, $\pi_{symb}>\pi \Leftrightarrow c_g<\Delta \pi$.

A sufficient condition for $c_g<\Delta \pi$, given that $c_g<c^{*}_g$, is $c^{*}_g<\Delta \pi$. Substituting $c^{*}_g$ (see definition 1) and $\Delta \pi$ and rearranging yields a sufficient condition of $\alpha g(d+p_g)c>bn(a-dg)$. Then, from $\Delta CS>0$ and $\Delta \pi>0$ we conclude that $\Delta TS>0$. 
\item $\frac{\partial \Delta \pi}{\partial d}=\frac{2c\alpha g [(a-dg)+(d+p_g)(1-g)]}{[2c+bn]^2}$. This expression is positive, given that the share $g<1$ and our assumption that $a-dg>0$. The rest of the proof is straightforward. 
\end{enumerate}
\end{proof}

\begin{proof}[Proof of proposition \ref{pollution}]
\begin{enumerate}
    \item According to inequality \eqref{comp1_poll}, $\Delta Poll>0 \Leftrightarrow (q_{symb}-q)>q_{symb} \alpha (1-k)$. If $1-k\rightarrow 0$, then the right-hand side tends to 0, and it is left to prove that the left-hand side is positive. This is true because $q_{symb}>q$ (recall proposition \ref{comparison}). 
    \item Substituting $q_{symb}$ and $q$ (equations \eqref{q_eq}, \eqref{q_symb}) in inequality \eqref{comp1_poll}, we obtain that $\Delta Poll>0\Leftrightarrow \frac{\alpha gn}{2c+bn}[g(d+p_g)-(1-k)[a-dg+\alpha g (d+p_g)]]>0$. Rearranging this inequality yields that $\frac{1}{1-k}-\alpha>\frac{a-dg}{g(d+p_g)}$. Deriving this inequality by $g,d,p_g,\alpha$, and $k$, part (3) is straightforward.
\end{enumerate}
\end{proof} 

\end{document}